\theoremstyle{definition}
\newtheorem{definition}{Definition}[subsection]
\newtheorem{example}{Beispiel}[subsection]
\newtheorem{bemerkung}{Bemerkung}[subsection]
\theoremstyle{plain}
\newtheorem{lemma}{Lemma}[subsection]
\title{Phänomen-Signal-Modell: Formalismus, Graph und Anwendung}
\author{Hans Nikolaus Beck\footnote{Robert Bosch GmbH, Abstatt}  \and Nayel Fabian Salem \footnote{Technische Universität Braunschweig, Institut für Regelungstechnik} \and Veronica Haber\footnote{PROSTEP AG, München} \and Dr.-Ing. Matthias Rauschenbach\footnote{Fraunhofer-Institut für Betriebsfestigkeit und Systemzuverlässigkeit LBF, Darmstadt} \and Jan Reich\footnote{Fraunhofer-Institut für Experimentelles Software Engineering IESE, Kaiserslautern}}
\begin{document}

\maketitle

\begin{abstract}
Betrachtet man Information als Grundlage des Handelns so wird es interessant sein, Fluss und Erfassung von Information zwischen den Akteuren des Verkehrsgeschehens zu untersuchen. Die zentrale Frage ist, welche Signale ein Automat im Straßenverkehr empfangen, decodieren oder senden muss, um konform zu geltenden Maßstäben und sicher zu agieren. Das Phänomen-Signal-Modell ist eine Methode, das Problemfeld zu strukturieren, eben diesen Signalfluss zu analysieren und zu beschreiben. Der vorliegende Aufsatz erklärt Grundlagen, Aufbau und Anwendung dieser Methode.
\end{abstract}

\section{Motivation und Aufgabe}\label{sec:motivation}

Im Projekt \emph{\glqq Verifikation und Validierung von autonomen Fahrzeugen L4/L5\grqq{}} (kurz
	\emph{VVM}\footnote{Das Projekt Verifikation und Validierung autonomer Fahrzeuge L4/L5 ist Teil der VDA Leitinitiative und Teil der PEGASUS Projekt Familie, vom Bundesministerium für Wirtschaft und Energie (BMWi, http://www.bmwi.de) gefördert. Web: www.vvm-projekt.de})
 ist die grundlegende Fragestellung, wie  Sicherheit und gesellschaftliche bzw. gesetzliche Konformität eines hochautomatisierten oder autonomen Fahrzeugs methodisch erreicht und begründet werden kann. VVM setzt auf den Ergebnissen des PEGASUS Projektes auf. 
 
 Aus dem Anspruch, der Automat solle ein Mitglied der sozialen Verkehrsgesellschaft sein, sich also ohne Zutun eines menschlichen Fahrers im urbanen Umfeld (konkret: Kreuzungen) bewegen, wird die schiere Zahl möglicher Geschehnisse und Situationen zum Hindernis bisheriger Ingenieursverfahren. Einen Beitrag, wie mit diesem als \emph{Open Context} bezeichneten Aspekt umgegangen werden könnte, ist Motivation und Ziel dieses Aufsatzes. 

In Bezug auf die Konstruktion eines solchen Automaten steht zu Beginn, wie in jedem Entwicklungsprozess, die Erhebung von Anforderungen. Gemäß der Klassifikation nach SAE \cite{sae}  entspricht dem Bau eines Automaten mit der Befähigung, Teil eines vom Menschen dominierten interaktiven Systems zu sein, die Konstruktion eines Level 4/5 Systems. Anforderungen formuliert als eine Sammlung von Systemfunktionen genügen augenscheinlich nicht. Der Grund liegt in der unendlichen Varianz möglicher Situationen. Der für die Automobilindustrie zentrale Sicherheitsstandard ISO 26262 \cite{iso3} \cite{iso4} \cite{iso5} ist eine Manifestation der Vorstellung, dass Systemfunktionen getrennt gedacht und die Bedingungen und Ansprüche ihres Einsatzes systematisch bestimmt werden können. Im SOTIF Standard \cite{sotif} wird  die Erkenntnis der Industrie evident, dass im Open Context eben diese Bedinungen für den Einsatz jener Systemfunktionen mannigfaltig sein kann. Das System wird also komplexe Verhaltensmuster generieren müssen, in denen je nach Situation unterschiedliche Kombinationen von Funktionen aufzurufen sind. Damit wird die Bestimmung von Anforderungen im Open Context eine Suche nach Verhalten, genauer Sollverhalten, welches zu finden und zusammen mit den notwendigen Fähigkeiten und Eigenschaften des Systems zu beschreiben diese Arbeit zum Ziele hat. 

Verkehrsgeschehen ist ganz wesentlich auch Kommunikationsgeschehen. Da der Automat selbst kein Subjekt ist, aber mit Subjekten im Verkehr interagieren soll, ist Anlass zu  dem Ansatz gegeben, Informations-Signalflüsse im Verkehr zu untersuchen. Durch eine entsprechende Modellbildung soll dabei erreicht werden, dass sich die hier dargestellte Formalisierung algorithmisieren und in ein Computerprogramm umsetzen lässt. Als Grundbaustein dafür wird es auch notwendig sein, das Verhalten der menschlichen Verkehrsteilnehmer, also das Verhalten nach Gesetz oder gesellschaftlichen Maßstäben, künftig als \textit{Normverhalten} bezeichnet, zu identifizieren.

Der Philosoph Edmund Husserl hat sich im Rahmen seiner Philosophie \cite{EdHrl} detaillierte Gedanken zu der Frage gemacht, wie Wahrnehmung und Kommunikation zwischen Subjekten geschehen. Ein zentrales Ergebnis ist, dass die Intentionalität von Zeichen und Worten, aber auch die bisherige Erfahrung der Subjekte konstituierend ist für die Art und Weise, wie jene Zeichen und Worte verstanden werden. Husserls Arbeiten legen zwingend nahe, dass der Informationsfluss allein nicht Handeln erklären kann. Gehörte oder gesehene Information erhält Bedeutung auf der Grundlage der individuellen Geschichte des erfassenden Subjektes, Information wird somit intentional (Sender) und subjektiv (Empfänger). 

Im Rahmen der Forschung zur Künstlichen Intelligenz und den daraus hervorgegangenen Agentenmodellen sind verschiedene Anteile des handelnden, entscheidenden Menschen in den Blick gerückt und Modellierungen entworfen worden. Grundfrage vieler Arbeiten ist, wie eine Maschine ihr Agieren nach einem Ziel ausrichtet. Dies führte zu diversen, an mathematischer Logik orientierten Algorithmen (siehe z.B. \cite{bradko} \cite{schalkoff} \cite{agenten}).

Mittels der Erweiterung der logischen Mittel, wie etwa dem Situation Calculus  \cite{Raymond}, werden Ansätze verfolgt, zusätzlich zu Zielen auch Situation und Ereignisse logisch in die Aktionsentscheidung einzubeziehen. Eine andere Anpassung betrifft die modale Logik, die die Kategorien \glqq notwendig\grqq{} und   \glqq möglich\grqq{} hinzunimmt. Damit wird und wurde versucht, das Wissen im Agenten und dessen Perzeption zu modellieren \cite{Hintikka}. Ergänzt werden all diese Bestrebungen durch den Versuch, psychologische Erkenntnise über menschliches Wissen und Entscheidungsfindung mathematisch zu modellieren (stellvertretend \cite{Tenenbaum} \cite{CompPsych}). 

\section{Ansatz}\label{sec:ansatz}

Verkehrsgeschehen als Komunikationsgeschehen und  Verkehrsgeschehen als durch Rechtsvorschriften geregelter Raum, die Suche nach und die Beschreibung von Sollverhalten, diese Gegebenheiten fordern ein Modell, welches sowohl die subjektive Seite des einzelnen Verkehrsteilnehmers mit seinem Vorwissen, als auch die Objektivierung hin zu Gesetzestexten und den Bau eines Automaten beschreiben kann.  In der Abwägung dieser Gründe erscheinen dann Husserls Ideen als attraktivster Ansatz für ein Modell, um die oben skizzierten Aufgaben anzugehen und einen wesentlichen Beitrag zu deren Lösung zu leisten.   

Vor Darlegung des Ansatzes sei zunächst eine allgemeine Bemerkung angebracht: Alle hier beschriebenen Betrachtungen beziehen sich auf ein Szenario der im Rahmen des VVM Projektes gewählten Anwendungsfälle, zu sehen in Abbildung \ref{fig:szenariosmall}. Zonen wie die gezeigten b1, b2 etc. sind ein geeignetes Modellierungsmittel,  Räume (oder besser: Flächen) eines Szenarios nach Relevanz einzuteilen. Die Art und Weise zu beschreiben, wie derartige Zonen systematisch abgeleitet werden können, würde jedoch den Rahmen dieses Beitrags sprengen, vgl. hierzu \cite{Butz}.

\begin{figure} [H]
	\centering
	\includegraphics[width=0.5\linewidth]{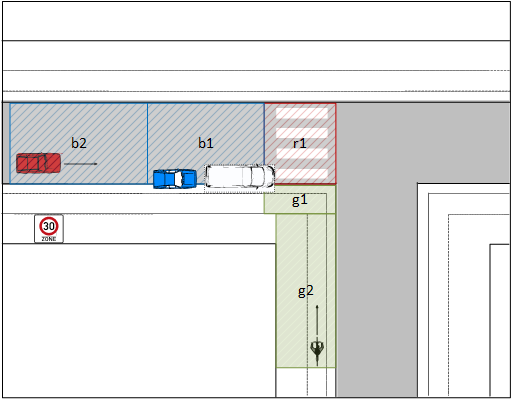}
	\caption[Abbildung]{Beispiel-Szenario, Ego ist rot }
	\label{fig:szenariosmall}
\end{figure}

Husserls Phänomenologie  \cite{EdHrl} als Grundlage dieser Arbeit legt  konsequenter Weise den Namen \emph{Phänomen-Signal-Modell}, oder kurz \emph{PSM}, für den hier dargstellten Ansatz nahe. Vereinfacht gesagt, gliedert sich das PSM in folgende Teile:

\begin{enumerate}
	\item Information, Wissen und Prognosen sind konstituierend für Aktionen. Der Weg, wie aus Erfassungen durch Sinne oder Sensoren diese Elemente gebildet werden, ist mittels der Ideen der Phänomenologie  zu beschreiben.
	\item Welche Aktionen auf der Basis jener Elemente möglich oder notwendig erscheinen, ist transparent zu beschreiben.
\end{enumerate}

Im Kontext des PSM werden diese Aufgaben erschlossen und einer Lösung zugänglich gemacht, indem Aktionen (oder Handlungen bei Subjekten) als \textit{Regeln} formuliert werden. Dies ist eine konstruktive Entscheidung. Zuvor sollen noch einige für den Ansatz relevante Begriffe verabredet sein.

\section{Begrifflichkeiten}\label{sec:begriffe}

In der Begriffsbildung wurde streng darauf geachtet, ob man eine Aussage über den Automaten oder einen Menschen treffen will. Ein Mensch kann wohl \textit{wahrnehmen}, im Falle des Automaten sprechen wir ausschliesslich von \textit{Messen} oder \textit{Erfassen}. Ebenso \textit{agiert} ein Automat, von \textit{Handeln} wird im Rahmen dieser Arbeit nur im Kontext von Menschen gesprochen. Einige Aspekte des maschinellen Handelns werden in \cite{Missel} beleuchtet, wovon das Konzept der Selbstursprünglichkeit besonders interessant erscheint, um vorprogrammiertes Verhalten von dem eines autonomen Systems zu unterscheiden. Selbstursprünglichkeit ist die Idee, dass ein System weitere Gründe des Interagierens mit der Umwelt haben kann, als reine Reiz-Reaktions- oder Input-Output Mechanismen. Das PSM ist jedoch nicht als Hilfsmittel gedacht, diese Probleme zu untersuchen, der Fokus liegt auf der Erfassungs-Signal-Wissenskette. Damit liegt das PSM der heute darstellbaren Technik auch näher. 

Der Begriff \textit{Phänomen} ist natürlich zentral für Husserls Arbeit. Hier kann nicht die detaillierte Definition von Husserl sinnvoll sein, da diese umfangreich und nicht leicht zu verstehen ist. Darum sei \textit{Phänomen} im Einklang mit \cite{Brd} S436ff als eine beobachtbare und in Bezug auf die eigene Intention relevante Gegebenheit der Umwelt verstanden. Wenn \textit{Information} als eine Menge von Daten mit Beziehung untereinander aufgefasst werden kann, dann ist ein \textit{Signal} eine Information von Bedeutung für das wahrnehmende Subjekt oder die erfassende Maschine. Zusammen mit dem vorhandenen Wissen aus dem bisherigen Erleben (Subjekt) oder eingegebenen Wissen (der Maschine) leistet ein Signal einen Beitrag für die Bildung von Prognosen und Aktionsentscheidungen. 

Dazu ein Beispiel: sehe ich als Autofahrer (der die StVO\footnote{Staßenverkehrsordnung} kennt und Erfahrung hat) die Ampel auf Gelb springen, so ist dieses ein Signal, woraus ich die Prognose stellen kann, dass ich meine Vorfahrtsberechtigung verlieren werde. 

Die Abbildung \ref{fig:wissenundoperator} visualisiert im Ergebnis des Beschriebenen den Kreislauf von Erfassung, Signal, Wissen und Prognose bzw. Aktion, welcher in Konsequenz das anzeigt, was ein Phänomen ausmacht. Durch das Zusammenspiel dieser Faktoren wird ein Erfassen von etwas aus der Umwelt zu einem bewussten (Subjekt) oder verarbeitbaren (Maschine) Phänomen. Das dort mehrfach gezeigte Symbol $\alpha A$ steht für eine Wirkung $\alpha$ der Aktion A. All diese Elemente werden im Abschnitt \ref{sec:sachverhaltebasis} weiter erläutert.

\begin{figure}
	\centering
	\includegraphics[width=0.85\linewidth]{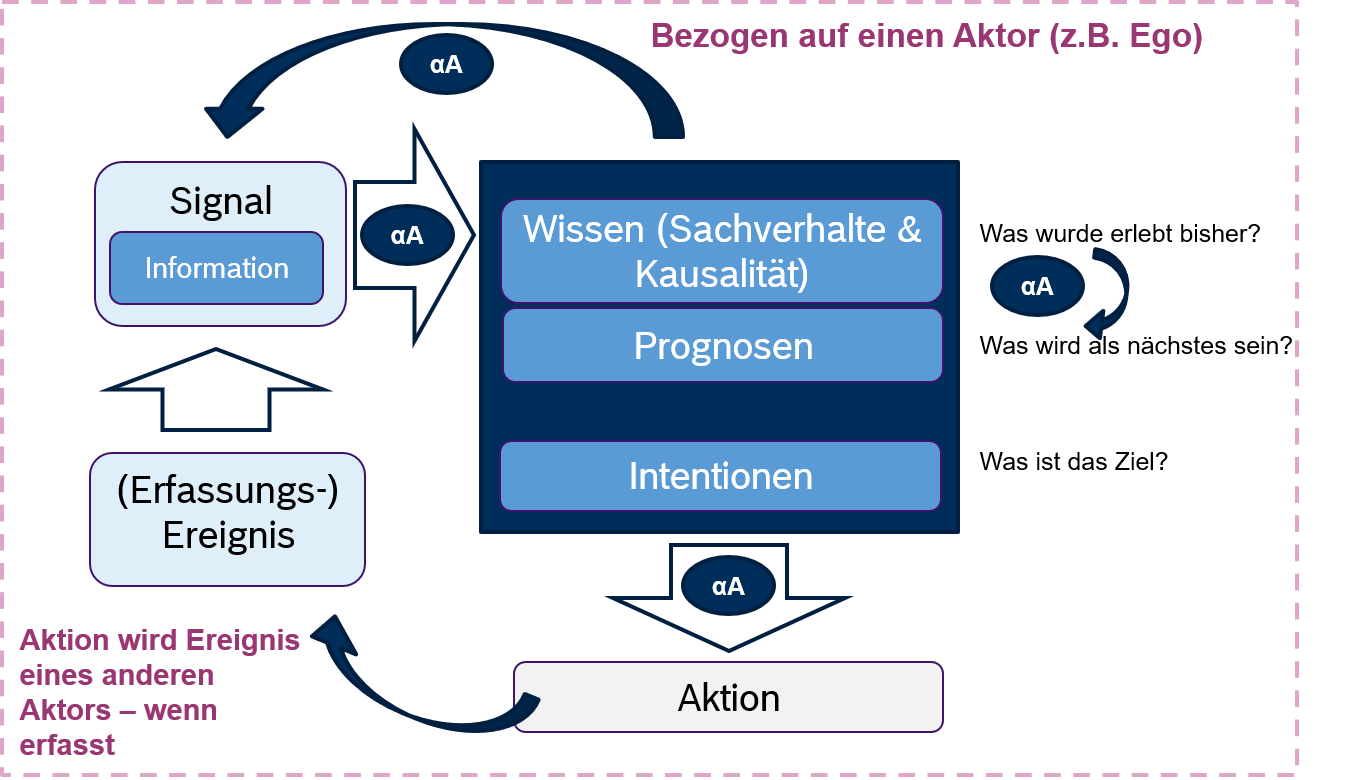}
	\caption[Ein Bild]{Erfassung-Signal-Wissen Kreislauf}
	\label{fig:wissenundoperator}
\end{figure}

Regeln, das wesentliche Mittel in dieser Arbeit, erfordern klar formulierte Bedingungen ihres Eintretens. In ähnlicher Weise, wie die Rechtswissenschaften, welche ja wesentlich das Normverhalten mitbestimmen,  Rechtsfolgen und Tatbestände in Gesetzen miteinander verknüpfen  \cite{Eng}, sind \textit{Sachverhalte} all diejenigen festgestellten Gegebenheiten, welche die Anwendung einer Regel begründen, also die \glqq WENN\grqq{} Bedingung darstellen. Und ebenfalls wie in den Rechtswissenschaften führen Indizien oder Tatbestandsmerkmale, hier \textit{Indikatoren} genannt, zu der Feststellung eines Sachverhaltes. 

\glqq Wissen\grqq{} im Rahmen des PSM hat somit zwei Aspekte: einerseits bezeichnet es das Vorwissen im Automaten, andererseits entspricht es dem erfassten und erkannten Sachverhalt. Natürlich ist es denkbar, dass dieser erkannte Sachverhalt zu neuem Wissen im Automaten führt, was letztlich nichts anderes als Lernen bedeutet. Dieser Aspekt wird in Folgearbeiten beleuchtet werden. Weil die Regeln, die dem Automaten gegeben werden, auf Sachverhalten gründen, kann gesagt werden, dass im Rahmen des PSM Wissen durch Sachverhalte dargestellt werden können.

Abbildung \ref{fig:sachverhalte} illustriert nochmals diese für diesen Ansatz wesentlicher Zusammenhänge. Notwendige \emph{Fähigkeiten} sind entsprechend jene, die zur Feststellung von Sachverhalten und den mittels Regeln implementierten gewünschten Aktionen, also dem \emph{Sollverhalten} notwendig sind. Zugunsten einer exakten Betrachtung des Begriffs \emph{Fähigkeit} im Kontext von Maschinen wird auf Arbeiten anderer Arbeitsgruppen in VVM verwiesen \cite{Jatz} \cite{Nolte} \cite{Reschka} \cite{Bagschik}. 

\begin{figure}[h]
	\centering
	\includegraphics[width=0.8\linewidth]{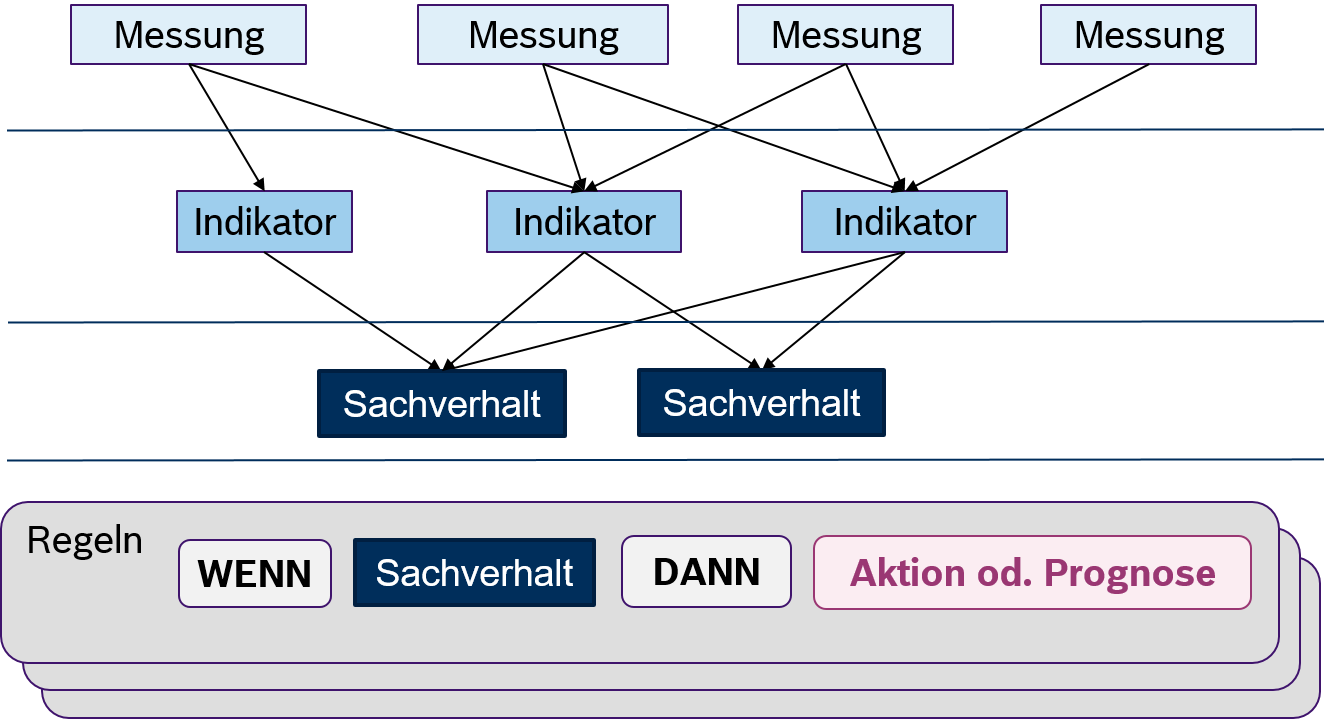}
	\caption[Abbildung]{Sachverhalte und Regeln}
	\label{fig:sachverhalte}
\end{figure}

Regeln, und das ist für die vorliegende Arbeit eine wesentliche Eigenschaft, bezeichen im Allgemeinen nicht nur Aktionen, die nach außen wirken, sondern auch jene, die nach \glqq innen\grqq{} gerichtet sind, also der Informationsverarbeitung oder Prognosenbildung dienen.

Man mag einwenden, dass eine Aktion nicht nur auf der Basis von Wissen und Signalen, sondern auch wesentlich von der Intention oder des Fahrauftrages, ein bestimmtes Ziel zu erreichen,  abhängt.  Da der Fokus der Arbeit auf dem in Abbildung \ref{fig:szenariosmall} dargestellten Szenario liegt wird als Grundintention des Automaten angenommen, dieses gewählte Szenario nach Süd oder nach Ost im Sinne der Darstellung zu verlassen. Dynamische Änderungen von Intentionen sind natürlich denkbar, sie sind verbunden mit Prognosen und hätten die Natur einer Entscheidung. Ist in Fahrtrichtung voraus in der Stadt ein Stau, so könnte die Prognose des Fahrers lauten, ein Umweg führe schneller zum Ziel. Im Interesse der verständlicheren Darstellung wird die Prognosebildung hier nicht näher untersucht.

\section{Ziele}\label{sec:ziele}
Mit dem bisher Beschriebenen ist das Ziel dieser Arbeit nun präzisierbar: Über eine geeignete Formalisierung soll ein Graph erzeugt werden können, der Informationsflüsse und Aktionen eines Verkehrs-Szenarios abbilden kann. Elemente dieser Abbildung sind Straßen, Akteure,  diverse Signalquellen, wie Verkehrsschilder, Lichtanlagen und dergleichen. Damit soll sichtbar werden, welche Aktionen bezogen auf die Informationslage möglich sind. Strukturierung des Problemfeldes und explizite Notation von Annahmen sind Vorteile, die ebenfalls der Intention dieses Ansatzes entsprechen. Über 3 Schritte wird diese Formalisierung in den Aufbau eines Graphen überführt: 

\begin{enumerate}
	\item Formalisierung von Information und deren Transformation zu Sachverhalten mittels Signalen.
	\item Definition von Regeln, die auf diese Sachverhalte formuliert sind.
	\item Anwendung aller Regeln zur Grapherstellung.
\end{enumerate}

Die Formalisierung sollte so erfolgen, dass eine Implementation in Software möglich ist. Der weitere Text ist diesen Schritten gewidmet.

An dieser Stelle ist ein methodischer Hinweis angebracht. Diese Arbeit steht im Geiste des konstruktiv-empirischen Erprobens, d.h. auf der Basis von plausiblen Ideen werden Bausteine geformt und damit Aussagen, Modelle oder Lösung\-en konstruiert mit dem Ziel, wesentlich empirische Gegebenheiten zu rekonstruieren oder beschreibar zu machen. Im Allgemeinen wird dann eine Erklärung oder Begründung  empirischer Phänomene  akzeptiert, wenn eine Konstruktion aus diesen Bausteinen oder ein Softwarelauf  die  Phänomene in den intendierten Aspekten reproduziert. Hier jedoch ist die Beschreibung des Bauskastens erstes Ziel und Grundlage für weitere Arbeiten.

\section {PSM - Elemente}\label{sec:sachverhaltebasis}
Zunächst soll begründet werden, warum der  mittels Abschnitt \ref{sec:ansatz}, Abbildung \ref{fig:sachverhalte} motivierte Ansatz im Zusammenspiel mit den eingführten Begrifflichkeiten sinnvoll und ziel\-führ\-end ist. 

Es sei dafür an die bereits dargelegte Fragestellung erinnert, wie Erfassung von Aspekten der Umwelt, Information, Wissen und Signale im Kontext des Verkehrsgeschehens modelliert werden könnten. Menschliche Eigenschaften und Handlungen im Verkehr, ihre Natur in Bezug auf Information etc. sind also zu erfassen und auf die Maschinenwelt zu übertragen. Eine Antwort kann offensichtlich nicht gefunden werden, ohne die besonderen Unterschiede von Mensch und Maschine zu berücksichtigen. Weder nimmt der Mensch  objektiv wahr,  noch ist der Mensch als soziales Wesen, eine der Grundaussagen der Husserlschen Philosophie, von gesellschaftlichen Manifestationen unabhängig, wie es z.B. gesellschaftliche Normen darstellen. 

Das PSM gliedert sich daher in zwei Teilbetrachtungen: 

\begin{enumerate}
	\item die \emph{intentionale Erfassung} und
	\item das \emph{intentionale Wissen} 
\end{enumerate}

\subsection{Intentionale Erfassung}

Unter dem Begriff \emph{intentionale Erfassung} sei folgende Interpretation der Husserlschen Arbeiten verstanden:  Eine Erfassung im Sinne einer oder mehrerer Messungen ergibt Information. Die Messung von Größe und Bewegung eines Objektes etwa sind durch die Bezogenheit auf das Objekt verbunden und daher eine Information über dieses Objekt. Aber warum sollte dieses Objekt überhaupt von Interesse sein, oder speziell seine Bewegung?

Für den erfahrenen Verkehrsteilnehmer sind solche Informationen von Interesse. Sie haben für ihn Bedeutung, denn seine Intention ist es ja, Kollisionen zu vermeiden oder die Folgen seines Handelns abzuschätzen. Diese Intentionalität bezogen auf die Information, die Bedeutung verleiht,  bezeichnen wir hier als \emph{Signal}. Eine Bewegung von rechts kann ein Signal für einen möglichen Vorfahrtskonflikt sein. Mit anderen Worten: eine erfasste Information wird zum Signal, erhält also Bedeutung für den Empfänger, wenn jene ein Anzeichen für etwas im Vorerleben oder Vorwissen Vorhandenes ist. Feuer ist ein Anzeichen von Gefahr, aber nur, wenn man schlechte Erfahrungen damit hatte, denn ebenso könnte es ein Signal (Anzeichen) für eine Wärmequelle sein. 

Im Straßenverkehr gibt es aber auch bewusst gesetzte Zeichen, wie Verkehrsschilder, Lichtanlagen oder Markierungen auf Fahrbahnen. All diese Dinge sind Beispiele dafür, dass ihre Gestaltung ebenfalls intentional ist.  Für viele Menschen ist z.B. aufgrund ihrer Sozialisation die Farbe Rot ein Signal für Gefahr, eine Tatsache, die sich Warnschilder und Ampeln und sogar Bremslichter zunutze machen. Wenn also ein Sender einer Information, im genannten Beispiel also  der Verkehrszeichenbauer, sich über die Signalwirkung im Klaren ist, also weiß, welche Information wie Form oder Farbe welches Signal für den gemeinen Verkehrsteilnehmer darstellt, so kann dies zielgerichtet, also intentional ausgenutzt werden. 

\emph{Intentionale Erfassung} meint damit, dass jegliche Information im Straßenverkehr auch ein Signal für dessen Empfänger ist, dessen Bedeutung sich in Abhängigkeit des Vorwissens oder Vorerlebens des Empfängers ergibt. 

\subsection{Intentionales Wissen}

Mit dem Begriff \emph{Intentionales Wissen} wird den Arbeiten der Wissenssoziologen Berger und Luckmann Rechnung getragen. In ihrem Buch \cite{BergLuck} untersuchen sie,  wie sich gesellschaftliche Wirklichkeit aus subjektiven Erlebnissen derer Individuen ausbildet. Für das PSM werden diese Ergebnisse wie folgt interpretiert: 
Gesellschaftliche Wirklichkeit wird als gesellschaftliches Wissen aufgefasst, also Wissen, über das jedes Individuum verfügt. Dies ist eine bewusste Vereinfachung im Hinblick auf die beschriebene Zielsetzung hier. 

Aufgrund der oben ausgeführten Subjektivität nimmt notwendiger Weise jedes Individuum der Gesellschaft seine Lebenswelt  - und hier beschränken wir uns ganz auf den Straßenverkehr - unterschiedlich wahr. 

Dazu wieder zwei Beispiele: der Schulweg der Kinder einer Schulklasse mag an dem Punkt, wo die Kinder eine Straße überqueren müssen, gefährlich sein. Während aber der eine Elternteil die Breite der Straße als Signal für das Risiko ansieht, mag die Zahl parkender Autos für andere das Signal für Risiko sein, wieder andere mögen sich an der Verkehrsdichte stören. Aus den unterschiedlichen Wahrnehmungen derselben Situation bleibt aber ein gemeinsamer Sachverhalt bestehen: die Überquerung der Straße an diesem Punkt ist riskant für Kinder.

In einem anderen Falle stelle sich bei den Bewohnern einer Stadt die Meinung ein, bei Passage einer bestimmten Kreuzung sei die Wahrscheinlichkeit eines Unfalls vergleichsweise hoch. Es spielt dabei keine Rolle, ob dieser von den Bewohnern vertretene Sachverhalt entstanden sein möge, weil tatsächlich viele Menschen immer wieder in Unfälle verwickelt wurden, oder ob dieses Schicksal nur einer Person widerfahren war, die aber gut vernetzt entsprechende Meinungsbildung betrieben hat. 

Gesetze können als Objektivierung der Summe individuellen Erlebens aufgefasst werden, denn in deren Tatbeständen sind Situationen oder Gegebenheiten beschrieben, die für alle Mitglieder der Gesellschaft relevant sind  und die sich aus der Entwicklung der Gesellschaft und ihrer Phänomene zu solch objektivem Wissen entwickelt haben. Im Buch \cite{Eng} sind Beispiele beschrieben, wie sich diese Tatbestände auch verändern können, in dem z.B. die Umstände eines Diebstahls sich wandeln, weil aufgrund der Industrialisierung plötzlich auch Strom zu stehlen möglich und reizvoll wurde.

In Anlehnung an die Konstruktion des Rechts soll hier also verabredet sein, dass \emph{Sachverhalte} dieses überindividuelle, also objektive Wissen darstellen, die sich aus den Einzelerfahrungen der Individuen bildet und für alle Individuen relevant ist. Aufgrund der Bedeutung und Herleitung steht das Konzept der  Sachverhalte für das \textit{intentionale Wissen} einer Gesellschaft im PSM. (Für die Rechtskundigen sei angemerkt, dass die Rechtswissenschaften zwischen Tatbestand, Tatbestandsmerkmal und Sachverhalt unterscheiden. Nicht jeder Sachverhalt ist Teil eines Tatbestandes. Diese Unterscheidung soll im Rahmen der PSM fallen gelassen werden. Der Begriff \glqq Tatbestand \grqq{} wird hier nicht verwendet. Soweit es das PSM betrifft ist nur das als  \glqq Sachverhalte \grqq{} bezeichnete regelrelevante, intentionale Wissen von Bedeutung. Dies wird im weiteren Verlauf, insbesondere auch Abschnitt \ref{sec:regeln}, deutlicher).

\subsection{Sachverhalt, Wissen, Signal}
Um die bis dahin entwickelte Begriffswelt nutzbar zu machen ist es geboten, eine technische Perspektive einzunehmen. Intentionales Wissen und intentionales Erfassen sind so noch nicht für eine Maschine umzusetzen. Man muss i.A. annehmen, dass Sachverhalte der Verkehrsgesellschaft als sozialen Gruppe oder dem, was die StVO beschreibt, nicht direkt technisch erfasst oder gemessen werden können. Die Brücke ist also die Frage, wie  Sachverhalte maschinell zu erfassen sind. Hier bietet wieder die Rechtswissenschaft den nächsten Schritt an. Tatbestandsmerkmale und Indizien sind etwas, die, wenn sie festgestellt wurden, auf einen Tatbestand hinweisen. Aber auch in der Physik werden Grössen oft indirekt bestimmt. Ein Thermometer beispielsweise zeigt nicht direkt die Temperatur an, sondern eigentlich die Dichteänderung der Materie aufgrund der Temperatur. Jene Dichteänderung ist also ein Indikator für eben jene Temperaturänderung. 

Diesen Beobachtungen Rechnung tragend seien \emph{Indikatoren} diejenigen Größen, die das Vorliegen eines Sachverhalts definieren. Das Vorhandensein eines Indikators kann wiederum durch eine oder mehrere technisch durchführbare Messungen überprüft werden.

Damit können, zusammen mit Abbildungen \ref{fig:sachverhalte} und \ref{fig:wissenundoperator}, die Kernelemente des PSM nachfolgend definierend beschrieben werden:

\begin{definition}\label{def:psmdef}
	Für Indikator, Signal, Wissen gelte folgende Festlegung:
	
	\begin{itemize}
		\item  \emph{Sachverhalte} bezeichnen das intentionale Wissen, das jedem Teilnehmer der sozialen Verkehrsgesellschaft zur Verfügung steht.
		\item Regeln in dem Automaten beschreiben Aktionen und Kausalitätsvermutungen. Regeln sind auf der Basis von Sachverhalten formuliert (WENN Sachverhalt DANN...). Wissen im Automaten ist die Summe aller Sachverhalte und Regeln.
		\item 
		 \emph{Sachverhalte} bezeichnen auch erfasste und erkannte Gegebenheiten und damit  das Wissen des Automaten über die Umwelt.
		\item 
		Ein \emph{Sachverhalt} sei gegeben, wenn alles dafür notwendige intentionale Erfassen, vermittelt durch \emph{Indikatoren}, die diesen Sachverhalt definieren,  gegeben sind.
		\item Ein \emph{Indikator} ist eine durch eine oder mehrere technisch durchgeführte Messungen beobachtbare Eigenschaft der Umgebung des autonomen Fahrzeugs.  Ein Indikator gelte als gegeben oder evident, wenn alle entsprechenden Messungen vorliegen.
		\item Das Ereignis, dass ein Indikator evident geworden ist, entspricht dem Erfassungsereignis.
		\item \emph{Signal} ist dasjenige Element, welches benötigt wird, um eine Erfassung in einen Sachverhalt zu transformieren.
	\end{itemize}
	
\end{definition}

Hiermit schließt sich der mit Abschnitt \ref{sec:ansatz} und \ref{sec:begriffe} begonnene Kreis.
	Nebenbemerkung: Für die Zielsetzung dieser Arbeit genügt es, Messungen und die dahinter liegenden Mechanismen als gegeben anzusehen und das philosophische Problem von Realität und Positivismus dahinter nicht zu hinterfragen. Diesem Thema haben sich ganze Disziplinen gewidmet. Eine Übersicht ist z.B. in \cite{Brd} zu finden. 

\section{Symbolische Formulierung}\label{sec:math}

\subsection{Mengen und Abbildungen}

Der Formalisierung von Sachverhalten liegt ein mathematisches Modell zugrunde, welches auf Sequenzen basiert. Das Modell ergibt sich anhand des Zusammenhangs von Sachverhalten und Indikatoren. Sachverhalte werden über Indikatoren erkannt, jedoch die Reihenfolge, in der diese Indikatoren erkannt werden mag i.A. von Bedeutung sein. Dies gibt Anlass zu den in diesem Abschnitt beschriebenen Definitionen.

\begin{definition}\label{def:basis}
	Die Menge $C = \{ c_{1}, c_{2}, ...,c_{n}\}$ mit $n\in\mathbb{N^{+}}$ der Elemente $c_{i}$ heißt \emph{kausale Grundmenge}. Die Elemente heißen \emph{Causae}.
	Daneben soll eine Menge $R=\{\varphi_{1}, \varphi_{2}, ...,\varphi_{n} \}$ gegeben sein, deren Elemente $\varphi_{i}$ \emph{Successus} heißen. Schließlich sei folgende Abbildung $\mathbf{F}$ definiert: 
		\[
	\mathbf{F} : R \times C \longmapsto W, \quad \mathbf{F}( \varphi_{r(i)}, c_{i} ) = \varphi_{r(i)}c_{i}, \quad  \varphi_{r(i)} c_{i} \in W
	\]
	Die $r(i)$ sind die  zu $\mathbf{F}$ gehörenden Indexfunktionen für die Zuweisung eines Index der Menge $C$ auf einen Index der Menge $R$. Mit dieser Abbildung wird jeder Causa aus $C$ ein Successus $\varphi$ zugeordnet, die Paare $ \varphi_{r(i)} c_{i}$ heißen  \emph{Effectus}. Entsprechend heißt die Menge $W$  \emph{Effektmenge} der kausalen Grundmenge. 	
\end{definition}

\begin{bemerkung}
	Zur Vereinfachung der Schreibweise soll im Folgenden mit $\varphi_{i}c_{i}$ immer das zugehörige Paar als Ergebnis der Abbildung $\mathbf{F}$ gemeint sein. Es soll nicht anzeigen, dass $r(i) = i$ sei. Wo es von Bedeutung ist, wird die ausführliche Schreibweise angewendet.
\end{bemerkung}
		 
\begin{example}
	Die $c_{i}$ können mit Indikatoren identifiziert werden. Deren Wirkung durch den Successus $\varphi_{i}$ ergibt die Möglichkeit, etwas zu messen. Der Indikator \glqq Ausdehnung\grqq{} erlaubt z.B. die Messung oder Bestimmung von Werten wie  \glqq 2m hoch \grqq{}, \glqq 1m breit\grqq{} etc. Die Menge $C$ enthält dann Indikatoren, deren Effectus messbar ist.
\end{example}
	
Effectus und Causae können aneinandergereiht werden, also Sequenzen bilden, was nachstehende Definition erklärt.

\begin{definition}\label{def:seq}
	
	Eine \emph{Sequenz} $s^{k}$ der Länge $k$ wird mittels der Elemente einer beliebigen, nicht leeren Menge  $A$   wie folgt definiert:
	Mit  $a_{i} \in  A$ und $i, k\in\mathbb{N^{+}}$ sei
	\begin{align*}
		s^{1} &= a_{1} \\
		s^{k} &= a_{1} a_{2}, ...,a_{k}. \\
	\end{align*}
 	Die Menge aller Sequenzen $S = \{s^{k1}_{1}, s^{k2}_{2},...,s^{km}_{m} \}$ mit $i, k, m \in \mathbb{N^{+}}$ heiße \emph{Sequenzmenge}.
\end{definition}

\begin{definition}\label{def:map}
	Für eine Sequenz $s^{k}$ mit $k \in \mathbb{N^{+}}$ sei eine Verknüpfung durch folgende Abbildung erklärt:
	\[
		f : S \times S \longmapsto S, \quad 
	\]
	mit 
	\begin{equation*}
		f(s^{k}_{1}, s^{1}_{2}) =  
		\begin{cases} 
			 s^{k+1} & \text{wenn } s^{k}_{1} \neq s^{1}_{1}  \\
			s^{k}_{1} &  \text{wenn }  s^{k}_{1} = s^{1}_{2}
		\end{cases}
	\end{equation*}
	
Eine Sequenz wird also nur verlängert, wenn das neue Element nicht dem letzten der Sequenz gleicht. Zur Begründung des ersten Falles:
	\begin{align*}
			f(s^{k}, s^{1}) = s^{k} \; s^{1} &=a_{1} a_{2}, ... a_{k} \; a_{1} \\
			&= a_{1} a_{2}, ...a_{l} \;a_{k+1}  \\
			&= s^{k+1}. 
	\end{align*}	
	
\end{definition}

\begin{lemma}
	Seien $s^{k}$ und $s^{l}$ unterschiedliche Sequenzen mit $k,l\in \mathbb{N^{+}}$, dann ist
	\begin{align*}
		s^{k+l} &= s^{k} \: s^{l} \\
		&= s^{1}_{1}  \: s^{1}_{2}  \: ...  \: s^{1}_{k}  \: s^{1}_{k+1}  \: ...  \: s^{1}_{k+l}.
	\end{align*}
\end{lemma}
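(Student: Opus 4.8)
The plan is to prove the statement by induction on $l$, the length of the second sequence, since the operation $f$ of Definition \ref{def:map} is only defined for appending a single element and the concatenation $s^k\,s^l$ of two multi-element sequences must therefore be understood as iterated single-step appending. The base case $l = 1$ is immediate: it is exactly the first case of Definition \ref{def:map}, whose justification $f(s^k, s^1) = s^k\,s^1 = a_1 a_2 \ldots a_k\,a_{k+1} = s^{k+1}$ is already recorded in the text. This identifies $s^{k+1}$ with the element-by-element listing $s^1_1 \ldots s^1_{k+1}$ and hence settles both equalities claimed in the lemma for $l = 1$.

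For the inductive step I would assume the statement for some $l \geq 1$ and consider $s^{l+1}$. First I would split off the final element, writing $s^{l+1} = f(s^l, s^1_{k+l+1})$ via Definition \ref{def:map}, so that concatenating $s^k$ with $s^{l+1}$ reduces to first forming $s^k\,s^l$ and then appending one more element. Applying the induction hypothesis gives $s^k\,s^l = s^{k+l} = s^1_1 \ldots s^1_{k+l}$, and a single further application of $f$ yields $f(s^{k+l}, s^1_{k+l+1}) = s^{k+l+1}$, extending the listing up to index $k+l+1$. The associative bookkeeping here—that appending the elements of $s^l$ one at a time and then the last element of $s^{l+1}$ produces the same result as concatenating $s^k$ with $s^{l+1}$ directly—is routine once the single-step operation is fixed.

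The delicate point, and the step I expect to be the main obstacle, is guaranteeing that each appending operation falls into the first (length-increasing) case of Definition \ref{def:map} rather than the collapsing second case. Concretely, the argument needs the element being appended to differ from the current last element, both at the junction between $s^k$ and $s^l$ and at every internal position; otherwise the concatenation would be shorter than $k+l$ and the indexing in the conclusion would break. This is precisely where the hypothesis that $s^k$ and $s^l$ are \emph{unterschiedliche Sequenzen} must be brought to bear: I would interpret and use it to exclude the equality $s^1_k = s^1_{k+1}$ at the seam and to ensure the sequences are internally collision-free, so that every invocation of $f$ genuinely lengthens the sequence by one. Since the literal reading of ``different sequences'' ($s^k \neq s^l$) is weaker than the non-collapse condition actually required, making this condition explicit and verifying it is preserved along the induction is the part that demands care rather than mere calculation.
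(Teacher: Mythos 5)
Your proof is correct and takes essentially the same route as the paper's own: the paper likewise reduces the concatenation to iterated applications of Definition \ref{def:map}, unfolding $s^{k}$ and $s^{l}$ element by element (its \emph{nach insgesamt $k$ Schritten} step is your induction run top-down) and then re-indexing the merged listing $s^{*}_{i} \mapsto s_{k+i}$. Your closing observation --- that the hypothesis \emph{unterschiedliche Sequenzen} is strictly weaker than the condition actually required, namely that no equal adjacent elements occur at the seam or inside either sequence so that every application of $f$ falls into the length-increasing case --- points to a genuine subtlety that the paper's own proof passes over silently as well.
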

\begin{proof}
	Nach Definition  \ref{def:map} ist 
	\begin{align*}
		s^{k} &=   s^{k-1}  \: s^{1}_{k}.  
	\end{align*}
	$s^{k-1}$ kann wieder dargestellt werden gemäß Definition \ref{def:map} als
	\begin{align*}
		s^{k-1} &= s^{k-2}  \: s^{1}_{k-1}
	\end{align*}
   und so 
   	\begin{align*}
   s^{k} &=   s^{k-1}  \: s^{1}_{k} \\
   		&=   s^{k-2}  \: s^{1}_{k-1}  \: s^{1}_{k}.
   \end{align*}
   	Nach insgesamt $k$ Schritten erhält man entsprechend
   	\begin{align*}
   	 s^{k} = s^{1}_{1}  \: s^{1}_{2}  \: ...  \: s^{1}_{k}.
   	\end{align*}
   
    Verfährt man mit $s^l$ genauso und setzt beide Darstellungen in die Behauptung ein, ergibt sich nach allfälliger Umbenennung (zur Verdeutlichung ist $s^{l}$ als $s^{*}$ geschrieben)
    
    \begin{align*}
  	 s^{k}  \: s^{*,l} &= s_{1}  \: s_{2}  \: ...  \: s_{k}  \: s^{*}_{1}  \: s^{*}_2  \: ...  \: s^{*}_{l} \\
  	 &=  s_{1}  \: s_{2}  \: ...  \: s_{k}  \: s_{k+1}  \: ...  \: s_{k+l}\\
  	 &= s^{k+l}.
    \end{align*}

\end{proof}

\begin{bemerkung}\label{rem:interpretation}
	Für die Abbildung $f$ sind nun verschiedene Interpretationen möglich:
	\begin{itemize}
		\item Werden die  $a_{i}$ der Definition \ref{def:seq} mit den Elementen aus  $W$ identifiziert, so ist $W = \{ \varphi_{1} c_{1}, \varphi_{2} c_{2}, ...,\varphi_{k} c_{k} \} = \{s^{1}_{1}, s^{1}_{2}, ..., s^{1}_{k}\}$ mit $s^{1}_{i} \in S_{W} \supseteq W$ und  $k\in\mathbb{N^{+}}$ (der untere Index dient wieder nur der Unterscheidung).  
		
		\item Setzt man $a_{i} = c_{i} \in S_{C} \supseteq C$ und $f_{C} : S_{C} \times S_{C} \longmapsto S_{C}, $ so erzeugt die Abbildung $f_{C}$ Sequenzen der Art $s^{k} = c_{1}c_{2}...c_{k}$.
		
		\item Werden die $a_{i}$ mit den Elementen von $R$ gleichgesetzt, so ist ganz entsprechend Obigem $f_{R} : S_{R} \times S_{R} \longmapsto S_{R}$ und  $s^{k} = \varphi_{1}\varphi_{2}...\varphi_{k} $.
		
	\end{itemize}
\end{bemerkung}

\begin{bemerkung}
	
	Zur Vereinfachung der Schreibweise kann der hochgestellte Index zur Längenangabe auch weggelassen werden, d.h. statt $s^{k}_{i}$ für eine Sequenz der Länge $k$ wird einfach $s_{i}$ geschrieben. Mit dieser Bezeichnung ist dann eine Länge $k \in \mathbb{N^{+}}$ impliziert, aber nicht explizit angezeigt.
		
\end{bemerkung}

\begin{definition}
Folgende Abbildung $\mathbf{H}$ definiert: 
\[
\mathbf{H} : S_{R} \times S_{C} \longmapsto \overline{W}, \quad \mathbf{H}( \varphi_{h(i)}, c_{i} ) = \varphi_{h(i)}c_{i}, \quad  \varphi_{h(i)} c_{i} \in \overline{W}
\]
Die $h(i)$ sind wieder die  zu $\mathbf{H}$ gehörende Indexfunktion für die Zuweisung eines Index der Menge $S_{C}$ auf einen Index der Menge $S_{R}$. Mit dieser Abbildung wird jeder Sequenz aus $S_{C}$, welches hier die Bedeutung einer Causa hat,  eine Sequenz aus $S_{R}$, welches ein  Successus darstellt,  zugeordnet.  Entsprechend heißt die Menge $\overline{W}$  \emph{Effektmenge} der kausalen Grundmenge. \par 

\end{definition}

Alle bisher eingeführten Abbildungen lassen sich in folgendem  \emph{Sequenz - Realisierungs - Diagram} visualisieren: 

\begin{figure}[h]
\centering

\begin{tikzcd}[row sep=huge, column sep = huge]	
	R \times C 
		\ar[r, "f_{C}; f_{R}"]
		\ar[d, "\mathbf{F}"]
	 &  S_{R} \times S_{C} \arrow[d, "\mathbf{H}", shift left=2ex] \\
	W 	\arrow[r, "f"]		& S_{W}, \overline{W} \ar[lu, dashrightarrow, shift left = 0.7ex] \ar[lu, dashrightarrow, shift right = 0.7ex]
\end{tikzcd}	

\caption{Sequenz-Realisierungs-Diagramm}
\label{fig:commute}
\end{figure}

Würde  $S_{W} = \overline{W}$ gelten und so das Diagramm kommutieren, so würde das bedeuten, dass die Sequenzbildung unabhängig von dem Weg auf dieselbe Menge von Sequenzen hinausläuft. Die anschließende Definition beschreibt ein Kritierum hierfür. 

\begin{definition}
	
Das Sequenz-Realisierungsdiagramm nach Abbildung \ref{fig:commute} kommutiert, d.h. es gilt  $S_{W} = \overline{W}$, wenn ein Isomorphismus $\mathbf{T} : S_{W} \longrightarrow \overline{W} $ existiert mit

\[
	 s_{1} = \varphi_{1}c_{1}\varphi_{2}c_{2},...,\varphi_{k}c_{k} \in S_{W}
\]	
und - als Ergebnis von $\mathbf{H}$
\[
		\overline{s_{2}} = \varphi_{1}\varphi_{2},...,\varphi_{k}c_{1} c_{2},...,c_{k} \in \overline{W},
\]
dann soll gelten 	
\[
 \mathbf{T} : S_{W} \longrightarrow \overline{W}, \{s_{1} \in S_{W}, \overline{s_{2}} \in \overline{W} \;|\; s_{1}  = \overline{s_{2}}  \}.
\]

\end{definition}

Da in Defintion \ref{def:basis} über die Natur der Elemente der Menge $C$ nichts weiter gefordert wurde, können die Elemente der Menge  $\overline{W}$ als Elemente einer Menge $C^{''}$, einer kausalen Grundmenge 2. Ordnung angesehen werden. Mithin sei $\overline{W} \subseteq C^{''}$. Zusammen mit einer beliebigen Menge $R^{''} = \{\omega_{1},\omega_{2}, ..., \omega_{o} \}$ von Successus zweiter Ordnung und Anwendung von $\mathbf{F}$ ergeben sich Sequenzen der Form
\begin{align}
	\omega_{j} \; \varphi_{1} c_{1} \varphi_{2} c_{2} ... \varphi_{p}c_{p} =&  
	\omega_{j} \; \varphi_{1} \varphi_{2} ... \varphi_{p} c_{1} c_{2}... c_{p} \label{eq:commute}
\end{align}

mit $p,o \in \mathbb{N^{+}}$ und $j \in \{1, 2, ... o\}$. 

\begin{bemerkung} ~
	
	\begin{itemize} 
		\item Es ist leicht zu sehen, dass Abbildung \ref{fig:commute} in allen Elementen jede beliebige Ordnung annehmen kann.
		\item Gleichung \eqref{eq:commute} macht es nun möglich, über Successus von Sequenzen zu sprechen. Da - wie in Bemerkung \ref{rem:interpretation} festgestellt - auch die $\varphi_{i} c_{i} = s^{1}$, also die Effectus erster Ordnung als Sequenzen der Länge 1 verstanden werden können, werden weitere Definitionen und Sätze nur noch über Sequenzen ausgedrückt.
	\end{itemize}
\end{bemerkung}

\subsection{Kalkül}\label{sec:kalkuel}

Aus dem vorigen Abschnitt wurde deutlich, dass beliebige Sequenzen $s^{k}_{i}$ Successus zugewiesen und die Resultate zu Sequenzen höherer Ordnung verkettet werden können wie $s^{'} = \omega_{1} s_{1} \omega_{2} s_{2} ... \omega_{n} s_{n}.$ Die nun folgenden Definitionen kennzeichnen spezielle Eigenschaften der Successus, wie Neutralität oder Konstanz und bereiten die Grundlage eines Kalküls. Zuvor muss jedoch präzisiert werden, was unter \glqq Gleichheit\grqq{} zu verstehen ist.

\begin{definition}\label{def:gleich}
	
Seien $s_{i}, s_{k}$ beliebige Sequenzen mit $ s_{i}, s_{k} \in \overline{W}$. Die Sequenzen heißen \emph{E-gleich}, wenn mit einem beliebigen Successus $n+1$.ter Ordnung $\omega \in R^{(n+1)} $ gilt
\[
	s_{i} \stackrel{E}{=} s_{k} \Longleftrightarrow \omega\:s_{i} = \omega\:s_{k}
\]
	
\end{definition}

Unter diesen Voraussetzungen lassen sich jetzt spezielle Sequenzen angeben.

\begin{definition}
	Ein Successus $\varphi^{c} \in R^{(n+1)}$ heißt \emph{Konstanz}, wenn gilt:
	\[
	\varphi^{c} s_{i} = \varphi^{c} s_{j} \quad \text{ für alle} \quad s_{i}, s_{j} \in \overline{W}.
	\]
\end{definition}

Eine Konstanz verhindert also, dass die durch die Sequenz gegebenen Causae einen Effekt haben. 

\begin{definition}
	Ein Successus n-ter Ordnung $ \mathcal{I}$ heißt \emph{Neutrum}, wenn gilt:
	\[
	\mathcal{I} s_{i} \stackrel{E}{=} s_{i} \quad \text{ für alle} \quad s_{i} \in \overline{W}.
	\]
\end{definition}

\begin{definition}
	Ein Successus n-ter Ordnung $\omega^{-1} c$ heißt \emph{Inverseffekt} oder \emph{Neutralisator}, wenn  gilt:
	\begin{align*}
		 \:\omega s_{i} \omega^{-1} s_{j} & \stackrel{E}{=}  \omega \omega^{-1} s_{i} s_{j}  \\ & \stackrel{E}{=} 
		  \:\mathcal{I}s_{i} s_{j} \quad \text{	für alle} \quad  s_{i}, s_{j} \in \overline{W}.
	\end{align*}
\end{definition}

\begin{lemma}\label{lem:nodouble1}
	Sei $\omega s_{1} \omega s_{2} $, dann  gilt 
	
	\[
	 \omega s_{1} \omega s_{2}  \stackrel{E}{=} \omega s_{1}s_{2}.
	\]
	
\begin{proof}
	Da das Diagramm \ref{fig:commute} kommutiert, kann die Sequenz $\omega s_{1} \omega s_{2} $ auch mit der Abbildung $f$ aus \ref{def:basis} dargestellt werden als 
	\[
		f(\omega, s_{1}) f(\omega, s_{2}) \stackrel{E}{=} f(\omega, \omega) f(s_{1}, s_{2}) = \omega s_{1} s_{2}
	\] 
	aufgrund der Eigenschaft von $f$, Fall 2.
	
\end{proof}

\end{lemma}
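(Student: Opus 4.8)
The plan is to reduce the claimed E-equality to the idempotency that is already built into the concatenation map $f$ of Definition \ref{def:map}. The key enabling fact is that the Sequenz-Realisierungs-Diagramm (Abbildung \ref{fig:commute}) commutes, which — made precise by equation \eqref{eq:commute} — lets me read the string $\omega s_1 \omega s_2$ in two interchangeable ways: either as a finished sequence, or as the result of applying the Successus $\omega$ to the Causae $s_1,s_2$ and concatenating. Concretely, first I would parse the string into two blocks,
\[
\omega s_1 \omega s_2 = f(\omega, s_1)\, f(\omega, s_2),
\]
reading each block $\omega s_i$ as one application of the realization map ($\mathbf{F}$, resp. $\mathbf{H}$ at the relevant order) of the Successus $\omega$ to the Causa $s_i$.

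Next I would use the commutation relation to collect the Successus parts and the Causa parts separately, which is exactly the content of \eqref{eq:commute}: a string of the form $\omega\,\varphi_1 c_1\cdots$ may be reorganized so that all Successus precede all Causae. Applying this regrouping to the two-block string brings the two copies of $\omega$ next to each other,
\[
f(\omega, s_1)\, f(\omega, s_2) \stackrel{E}{=} f(\omega, \omega)\, f(s_1, s_2).
\]
Note that here I expect only $\stackrel{E}{=}$ (Definition \ref{def:gleich}) rather than literal equality, because the reorganization the diagram guarantees is valid up to prepending a higher-order Successus.

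With the two $\omega$'s adjacent, the rest is immediate: I invoke the second case of the definition of $f$ in Definition \ref{def:map}, namely that since $\omega = \omega$ the concatenation of $\omega$ with itself does not lengthen the sequence, so $f(\omega,\omega)=\omega$. Together with $f(s_1,s_2)=s_1 s_2$ (first case), transitivity of $\stackrel{E}{=}$ yields
\[
\omega s_1 \omega s_2 \stackrel{E}{=} \omega\, f(s_1, s_2) = \omega s_1 s_2,
\]
as claimed.

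The genuine obstacle is therefore not the idempotency step, which is trivial, but justifying the middle E-equality. I would need to check carefully that the commutativity established in \eqref{eq:commute} for single Effectus really extends to the present situation, where the Causae are themselves sequences $s_1,s_2$ and $\omega$ is a higher-order Successus; in particular I must verify that the index functions match each occurrence of $\omega$ to the intended block, so that the regrouping $f(\omega,s_1)f(\omega,s_2)\stackrel{E}{=}f(\omega,\omega)f(s_1,s_2)$ is licensed at the order actually used. Once the diagram is confirmed to commute at that order, the proof closes at once.
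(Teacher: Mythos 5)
Your proposal is correct and follows essentially the same route as the paper's own proof: decompose $\omega s_{1} \omega s_{2}$ as $f(\omega, s_{1})\, f(\omega, s_{2})$, use the commutativity of the Sequenz-Realisierungs-Diagramm to regroup this E-equally as $f(\omega,\omega)\, f(s_{1}, s_{2})$, and then apply case 2 of Definition \ref{def:map} to collapse $f(\omega,\omega)$ to $\omega$. Your added caution about verifying the regrouping at the relevant order is a fair observation, but it does not change the argument, which matches the paper's.
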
\label{lem:nodouble2}
Es gibt also keine Doppelung gleicher Effekte innerhalb einer Sequenz. Mit gleicher Beweisstruktur folgt

\begin{lemma}
	Sei $\omega_{1} s \omega_{2} s$, dann  gilt 
	
	\[
	\omega_{1} s \; \omega_{2} s  \stackrel{E}{=} \omega_{1} \omega_{2} \;s.
	\]
\end{lemma}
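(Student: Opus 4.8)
The plan is to reproduce the argument of Lemma \ref{lem:nodouble1} verbatim, with the roles of Successus and Causa exchanged. Since the diagram in Abbildung \ref{fig:commute} commutes, the sequence $\omega_{1} s \, \omega_{2} s$ may be written through the map $f$ of Definition \ref{def:map} as $f(\omega_{1}, s)\, f(\omega_{2}, s)$, and commutation (the factoring expressed in \eqref{eq:commute}, which pulls the Successus to the front and the Causae to the back) licenses the regrouping
\[
	f(\omega_{1}, s)\, f(\omega_{2}, s) \stackrel{E}{=} f(\omega_{1}, \omega_{2})\, f(s, s).
\]

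With this regrouping in hand, I would evaluate the two factors separately. For the Causa factor $f(s, s)$ the two arguments coincide, so Fall 2 of Definition \ref{def:map} applies and the sequence is \emph{not} extended, giving $f(s, s) = s$; this is exactly the collapse of the doubled element, playing the role that $f(\omega, \omega) = \omega$ played in Lemma \ref{lem:nodouble1}. For the Successus factor, assuming $\omega_{1} \neq \omega_{2}$, Fall 1 yields the ordinary concatenation $f(\omega_{1}, \omega_{2}) = \omega_{1} \omega_{2}$. Substituting both evaluations back produces $\omega_{1} \omega_{2}\, s$, which is the asserted identity.

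The one place that needs care, and where I expect the only genuine friction, is matching the \emph{direction} of the collapse to the structure guaranteed by commutation: because \eqref{eq:commute} groups like with like, the idempotent step from Fall 2 must be applied to the Causa factor $f(s,s)$ and must \emph{not} be misapplied to the Successus factor, since the Successus $\omega_{1}, \omega_{2}$ are in general distinct. As a consistency check I would note the degenerate case $\omega_{1} = \omega_{2}$: then Fall 2 also collapses the Successus factor to $\omega_{1}$, and reading $\omega_{1}\omega_{2}$ on the right-hand side through $f$ likewise collapses it to $\omega_{1}$, so the claimed equality $\omega_{1} s\,\omega_{2} s \stackrel{E}{=} \omega_{1}\omega_{2}\,s$ remains valid. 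Beyond bookkeeping the two cases of $f$, no obstacle arises, which is precisely why the text can assert the result with the same proof structure (\glqq mit gleicher Beweisstruktur\grqq{}).
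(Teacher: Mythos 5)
Your proof is correct and coincides with what the paper intends: the text supplies no separate proof for this lemma, asserting only that it follows \glqq mit gleicher Beweisstruktur\grqq{} as Lemma \ref{lem:nodouble1}, and your regrouping $f(\omega_{1},s)\,f(\omega_{2},s) \stackrel{E}{=} f(\omega_{1},\omega_{2})\,f(s,s)$ with Fall 2 collapsing $f(s,s)=s$ and Fall 1 giving $f(\omega_{1},\omega_{2})=\omega_{1}\omega_{2}$ is precisely that structure carried out explicitly. Your consistency check of the degenerate case $\omega_{1}=\omega_{2}$ is a small addition the paper omits.
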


Unter bestimmten Bedingungen kann die Reihenfolge der Causae getauscht werden. Dies beschreibt folgendes Lemma.

\begin{lemma}
	Seien $\omega s_{i}$ und $\omega s_{j}$ zwei Effectus aus $\overline{W}$ und es gelte
	\[
		\omega s_{i} \omega s_{j} \stackrel{E}{=} \omega s_{j} \omega s_{i} 
	\] dann ist
	\[
			 s_{i} s_{j}  \stackrel{E}{=}  s_{j} s_{i}
	\]
	
\begin{proof}

	\begin{align*}
 \omega s_{i} \omega s_{j} &\stackrel{E}{=} \omega s_{j} \omega s_{i} \\
	\omega\:\omega s_{i} s_{j}  &\stackrel{E}{=}	\omega\:\omega s_{j} s_{i} \\
	\omega s_{i} s_{j}  &\stackrel{E}{=}	\omega s_{j} s_{i} \\
 	s_{i} s_{j} &\stackrel{E}{=} s_{j} s_{i}
	\end{align*}

\end{proof}
\end{lemma}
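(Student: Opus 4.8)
The plan is to reduce the assertion to two ingredients already established: the no-doubling behaviour of $f$ and the cancellation of a leading Successus that is built into the definition of E-equality. The starting point is the hypothesis $\omega s_{i} \omega s_{j} \stackrel{E}{=} \omega s_{j} \omega s_{i}$, and the goal is to first eliminate the repeated occurrences of $\omega$ and then the single common leading $\omega$.

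First I would use the commutativity of the Sequenz-Realisierungs-Diagramm (Abbildung \ref{fig:commute}), made explicit in Gleichung \eqref{eq:commute}, to regroup each side so that all Successus stand in front of all Causae. This rewrites $\omega s_{i} \omega s_{j}$ as $\omega\,\omega\, s_{i} s_{j}$ and, symmetrically, $\omega s_{j} \omega s_{i}$ as $\omega\,\omega\, s_{j} s_{i}$, so that the hypothesis takes the form $\omega\,\omega\, s_{i} s_{j} \stackrel{E}{=} \omega\,\omega\, s_{j} s_{i}$. Second I would invoke the no-doubling property of $f$ (Fall~2 von Definition \ref{def:map}, wie in Lemma \ref{lem:nodouble1} verwendet), nach der $\omega\,\omega = \omega$ gilt; applying this on both sides collapses the double occurrence and yields $\omega s_{i} s_{j} \stackrel{E}{=} \omega s_{j} s_{i}$.

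Third I would read off the conclusion from Definition \ref{def:gleich}: E-equality of two sequences is precisely plain equality after prepending an arbitrary Successus, so the common leading $\omega$ here plays exactly the role of that witnessing Successus. Removing it therefore leaves $s_{i} s_{j} \stackrel{E}{=} s_{j} s_{i}$, which is the claim. The whole argument thus proceeds by the chain $\omega s_{i} \omega s_{j} \stackrel{E}{=} \omega\,\omega\, s_{i} s_{j} \stackrel{E}{=} \omega s_{i} s_{j}$ and the mirror chain for the right-hand side, linked through the hypothesis.

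I expect the delicate step to be the last one, namely the legitimacy of cancelling the leading $\omega$. It hinges on interpreting Definition \ref{def:gleich} as permitting a common leading Successus to be added or removed without affecting E-equality, rather than merely as an existence statement about some witness. It also depends critically on the regrouping in the first step being valid, which is only guaranteed because the diagram genuinely commutes; without commutativity one could not move $\omega$ past $s_{i}$, and the reduction to $\omega\,\omega\, s_{i} s_{j}$ — and hence the entire proof — would fail.
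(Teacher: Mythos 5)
Your proposal is correct and follows essentially the same route as the paper's own proof: regroup via the commutativity of the Sequenz-Realisierungs-Diagramm, collapse the doubled $\omega$ by the no-doubling property, and cancel the common leading $\omega$ via Definition \ref{def:gleich}. Your closing remark about the legitimacy of cancelling the leading Successus is apt, since the paper performs that step silently as well.
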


\section{PSM Symbolik der Sachverhalte}\label{sec:symbolik}

Nun stellt sich die Frage, welche Messungen und welche Indikatoren sich zu Sachverhalten ergeben. Messungen sind stark von technischen Randbedingung beeinflusst. Arbeitsgruppen im VVM beschäftigen sich mit entsprechenden Sensormodellen. Um die Diskussion nachvollziehbar zu halten, werden Indikatoren nur abstrakt definiert.  Eines sei aber doch über die Natur dieser Beziehung vermerkt: Im Allgemeinen stecken in der Frage, welche Indikatoren einen Sachverhalt als evident erscheinen lassen, die Kausalitätsvermutungen, die uns Wissenschaft und Erfahrung vermitteln. Eine empirische Wissenschaft stellt auf der Basis von Beobachtungen Hypothesen über Zusammenhänge auf, die sich ausnutzen lassen, um Indikatoren zu finden. Im Falle des bereits genannten Thermometers ist die temperaturabhängige Ausdehnung einer Flüssigkeit eine solche Kausalitätsvermutung und so eine mögliche Option für einen Indikator für die Temperaturbestimmung. 

Normative Wissenschaften wie das Recht erklären solche Kausalitäts\-vermu\-tungen, oder sie bilden sich als gesellschaftlicher Konsens heraus \cite{BergLuck}.  \glqq Diebstahl ist, wenn jemand Eigentum eines anderen entwendet\grqq{} wäre eine solche, hier zur Illustration stark vereinfachte,  gesetzte Kausalität.  Wichtig ist nur, dass die Relation zwischen Sachverhalten, deren Indikatoren und Messungen eine modellhafte Abbildung dessen sind, was in der Lebenswelt der Gesellschaft als bedeutsam bzw. intentional angesehen wird. Die Wahl von Indikatoren und Sachverhalten ist somit einerseits andwendungsabhängig, andererseits ermöglicht dies aber auch, die Aussagen der PSM auf die Fragestellung fein abzustimmen. Soweit es die vorliegende Arbeit angeht, werden Indikatoren beispielhaft und ohne tiefere Begründung ausgeführt. Für Details zur Wahl von Sachverhalten und Indikatoren siehe \cite{Salem}.

Nachstehende Tabelle \ref{tab:indikatoren} zeigt eine beispielhafte Auswahl von Indikatoren, deren Symbole und Semantik. Diese Auswahl wurde im Hinlick auf Abbildung \ref{fig:szenario} und mit der Absicht getroffen, die grundlegenden Ideen verständlich darzustellen, so dass diese für umfangreiche Anwendungsfälle leicht adaptiert werden kann.

\begin{table}[h]
	\caption[Tabelle]{Indikatoren}
	\small
	\label{tab:indikatoren}
\begin{tabularx}{\textwidth}{|c|c|l|X|}
	\hline 
	\textbf{Symbol} & \textbf{Indikator} & \textbf{Wertebereich} & \textbf{Erläuterung} \\ 
	 &  Causa &  Successus &\\ 
	\hline 
	P & Position & b1, b2, g1, r1 & Verweis auf die Position einer Erfassung. Die Angabe erfolgt in Zonen nach Abbildung \ref{fig:szenario} \\ 
	\hline 
	A& Ausdehnung & r, f, a, l &  Ausdehnung einer Erfassung, also Ausdehnung eines Fußgängers f, Radfahrers r, Autos a oder LKW l \\ 
	\hline 
	Q& Qualität & ü, r, f, a, l &  entsprechend A, jedoch als eine abstrakte Qualität, z.B. Reflexionseigenschaften o.ä. \glqq ü\grqq{} steht für Überweg (Zebrastreifen)\\
	\hline
	R& Richtung & $<$, $>$, +, -& Richtung der Erfassung gemessen vom Ego Fahrzeug,"+" für gerade aus in Fahrtrichtung,  \glqq$<$\grqq{} links, \glqq$>$\grqq{} rechts  \\ 
	\hline 
	B& Bewegung & 0, $<$, $>$, +, -& Bewegung in entsprechender Richtung. 0 bedeutet keine Bewegung oder Stop   \\ 
	\hline 
\end{tabularx} 

\end{table}
\paragraph{}

Diese Indikatoren werden nun in den symbolischen Formalismus nach Abschnitt  \ref{sec:math} eingebettet, indem sie die Menge $C$ aus Definition \ref{def:basis} bilden. Deren Wertebereich ist der Wertebereich der Successus $R$ gemäß eben dieser Definition. Damit ergibt sich:

\begin{align*}
C &= \{P, A, Q, R, B\} \\
R &= \{b1, b2, g1, r1, r, f, a, l, <, >, + , -, 0\}
\end{align*} 

Tabelle \ref{tab:indikatoren} in mathematischer Schreibweise zu notieren würde ergeben:  $b1P \in W, rA \in W, rQ \in W$, jedoch z.B. $<P = \mathcal{I}$, d.h. alle Paare $\varphi_{r(i)} c_{i}$, die nicht durch den in besagter Tabelle abgebildeten Wertebereich abgedeckt sind, realisieren nur das Neutrum $ \mathcal{I}$. Mittels dieser Interpretation ist Tabelle \ref{tab:indikatoren}  vollständig als Menge $W$ definierbar. 

Mit Bezug auf das Sequenz-Realisierungs-Diagramm \ref{fig:commute} können durch Verkettung mit $f$ Effektussequenzen aufgebaut werden. Der Übergang in die 2. Ordnung, also Successus 2. Ordnung dieser Effektussequenzen beschreibt Erfassung und Sachverhalt mittels folgender Definition: 

\begin{definition}
Sei gemäß oben beschriebenen 
\[
C^{''} := W = \{b1P, b2P, ..., rA...rQ..., +R..,\mathcal{I}\}
\]

die Causae 2. Ordnung (= $\varphi_{f(i)} s_{i}$), sowie
\[
R^{''} = \{!, ?, !^{-1}, ?^{-1}\}
\] 

die Successus 2. Ordnung, inklusive deren Inverseffekte. Dann heißen die Bilder  $\mathbf{F}(\varphi,s) = \:?\:\varphi_{f(i)}s_{i} $ die \emph{Erfassungsereignisse} und die Bilder  $\mathbf{F}(\varphi,s) = \:!\:\varphi_{f(i)}s_{i} $ die \emph{Sachverhalte} eines Szenarios.
\end{definition}

Es ist klar, dass die Menge aller Effectus-Sequenzen  $\varphi_{f(i)} s_{i}$, die die Natur von Sequenzen realisierter Indikatoren hat,  alle überhaupt darstellbaren Sachverhalte und Erfassungen im Modell bestimmt. Sachverhalte stellen sich somit als ein Successus 2. Ordnung mit einem Causa dar, formal $!\:\varphi_{f(i)}s_{i} $ mit $i=1...k$ und $k \in \mathbb{N^{+}}$, wobei der Causa aus einer Effectus-Sequenz $\varphi_{f(i)} s_{i}$ der Länge $k$ besteht. 

Durch die Wahl der Indikatoren und ihrer Realisierungen (der Successus) kann somit das Modell ganz auf die Problemstellung eingestellt werden. Nun ist der Weg frei, Wissen und Sachverhalte in einer Maschine darzustellen und unter Anwendung des Kalküls nach Abschnitt \ref{sec:kalkuel} Regeln formalisiert darauf auszudrücken. 

\section{PSM -  Regeln}\label{sec:regeln}

Mit den in einer Symbolsprache codierten Sachverhalten können jetzt Regeln entsprechend Abschnitt \ref{sec:ziele} kompakt und einfach notiert werden.  Regeln beschreiben unterschiedliche Aspekte des Gesamtmodells. Um diese zu klären, sei nochmals auf Abbildung \ref{fig:wissenundoperator} verwiesen. Alle dort gezeigten mit dem Symbol $\alpha A$ bezeichneten Prozesse sind Gegenstand von Regeln. Entsprechend deren Rolle lassen sich folgende Regeltypen unterscheiden:

\begin{figure}[h]
	\centering
	\includegraphics[width=0.7\linewidth]{UC23_simple}
	\caption[Abbildung]{Beispiel-Szenario }
	\label{fig:szenario}
\end{figure}

\begin{itemize}
	\item \emph{Verhaltensregeln}: Wie in Abbildung \ref{fig:wissenundoperator} zu sehen, führen Wissen und Prognosen zu Aktionen. Sachverhalte repräsentieren im hier vertretenen Modell das Wissen. Regeln, die aus einem Sachverhalt folgende, nach außen sichtbare Aktion beschreiben, sind also Verhaltenstregeln. 
	
	Verhaltensregeln modellieren einerseits das \emph{Normverhalten}, also Erwartungen an Verhalten, die sich auf den im gesellschaftlichen Leben geformten Sachverhalten begründen. Ebenso stehen sie aber für das \emph{Sollverhalten}, welches dasjenige Verhalten bezeichnet, das der Erbauer des autonomen Fahrzeugs als Produkentscheidung haben möchte, oder das als Anforderung aus Gesetzen oder Normverhalten in die Entwicklung eingeht.  Die semantische Analyse der PSM (\cite{Salem}) ist eine Methode,  solche Verhaltensregeln zu identifizieren. 
	
	\item \emph{Strukturregeln}: Die  $\alpha A$ in Abbildung \ref{fig:wissenundoperator}  repräsentieren auch die Erkennung eines Signals als solches. Wie bereits ausführlich beschrieben, ist dieser Vorgang von dem vorhandenen Wissen, gegeben in Form von Regeln und Sachverhalten, abhängig. Nach Definition \ref{def:psmdef} bedeutet ein Signal zu beschreiben, den Weg von der Erfassung zum Sachverhalt zu notieren bzw. zu formalisieren. 
	
	Die Modellierung des Weges von Erfassung zu Indikatoren fällt leicht, denn hier werden meist physikalische Gesetze eine Rolle spielen. Man erinnere sich an das oben erwähnte Beispiel des Thermometers und auch an Abbildung \ref{fig:sachverhalte}. Strukturregeln an dieser Stelle des Modells können damit aus den in Abschnitt \ref{sec:symbolik} erwähnten Kausalvermutungen gewonnen werden.
	
	Weniger offensichtlich ist der Weg von Indikatoren zu erkannten Sachverhalten. Im Husserlschen Sinne transportieren Signale Anzeichen, die,  um nun in den Sprachgebrauch des PSM  überzugehen, für einen oder mehrere bereits in der Maschine vorhande Regeln und deren Sachverhalte stehen. 
	
	Ein Sachverhalt ist ja eine Sequenz (von Effectus). Sind alle Indikatoren realisiert, so ist der damit definierte Sachverhalt erfasst, also  $?\:\varphi_{f(i)}s_{i} $. Findet sich diese Sequenz $\varphi_{f(i)}s_{i}$ als Teil eines hinterlegten Sachverhaltes der Form $?^{-1}\:\varphi_{f(i)}s_{i}$,  so wird nach den Regeln des Formalismus $?\:\varphi_{f(i)}s_{i} $ zu $!\:\varphi_{f(i)}s_{i} $ und der Sachverhalt ist erkannt.
	
	Genau diese Verknüpfung wird ebenfalls mittels Strukturregeln ausgedrückt und entspricht, weil sie den Pfad von Erfassung zu Sachverhalt vervollständigt, zusammen mit den Ausdrücken der Art $?^{-1}\:\varphi_{f(i)}s_{i}$,  der Repräsentation eines Signals im PSM. 
	
	Strukturreglen beschreiben zudem strukturelle Entwicklungen einer Szene, wie Sichtbarkeitsbedingungen (wer kann wen sehen und von wo) oder Bewegungen von Objekten oder Agenten.
	
	\item \emph{Äquivalenzregeln}: Diese Regeln beschreiben Transformationen der (Indi\-kator-) Sequenzen, die eine Gesamtwirkung nicht ändern. In der Sprache des Formalismus entspricht dies dem Konzept des \glqq E-gleich\grqq{} nach Definition \ref{def:gleich}. Ein Beispiel dazu wird dies später illustrieren. Regeln dieser Klasse sind z.B. nützlich, wenn eine Messung durch eine technisch leichter zugängliche Messung ersetzt werden kann.
	
\end{itemize}

Um die angeführten Regeln konkret zu formalisieren, sei nochmals das zugrunde liegene Beispielszenario angeführt. Vorab sei noch angemerkt, dass 

\begin{itemize}
	\item es hier nicht um geometrische Exaktheit in den Regelaussagen geht, vielmehr soll die Methode und der Modellaufbau veranschaulicht werden,
	\item die Regeln, insbesondere die Verhaltensregeln,  immer aus der Sicht des Ego-Fahrzeugs zu formulieren sind, andere Agenten gehen lediglich über Strukturregeln ein,
	\item  große Buchstaben X, Y, Z in den symbolischen Sequenzen für beliebige Indikatoren stehen, 
	\item kleine Buchstaben x, y, z für deren Successus stehen,
	\item die angegebenen Regeln Anwendungen des Kalküls repräsentieren. Mehrere Bedingungen einer Regel sind mit logischem UND verknüpft.
\end{itemize}

\begin{table}[h]
	\caption{Strukturregeln bezogen auf Dynamik}
	\small
	\begin{tabular}{p{0.15\textwidth}m{0.3\textwidth}m{0.5\textwidth}}
		\hline
		Thema & Regel & Bedeutung\\
		\hline
	 	Sichtbarkeit&
	 		\begin{tikzcd}[row sep=0]
	 			Xb1P \ar[rd]   & \\
	 		 & ?Yr1P\\
	 		Yr1P \ar[ru]	 &
	 	\end{tikzcd} & Von Zone b1 aus kann etwas in r1 erfasst werden, sofern dort etwas ist.\\
 	\hline
		Sichtbarkeit & 	\begin{tikzcd}[row sep=0.1]
			Xb2P \ar[rd]    & \\
			 & ?Yr1P\\
			Yr1P \ar[ru] &
		\end{tikzcd}  & Entsprechend für Zone b2\\
	\hline
		Sichtbarkeit & \begin{tikzcd}[row sep=0]
			Xb2P \ar[rd]    & \\
		 & ?Yg2P\\
			Yg2P \ar[ru] &
		\end{tikzcd}  & Von Zone b2 aus kann etwas in Zone g2 erfasst werden.\\
	\hline
		Bewegung & +Bb2P $\longrightarrow$ +Bb1P & Wenn sich ein Agent bewegt, wechselt seine Position von b2 nach b1 und erhält dabei die Bewegung\\
		\hline
		Bewegung & +Bb1P $\longrightarrow$ +Br1P & Entsprechend b1 nach r1 \\
		\hline
		Bewegung & X+Bg2P $\longrightarrow$ X+Bg1P & Ein Objekt in g2 bewegt sich nach g1 (=Radfahrer hier im Beispiel)\\
		\hline
		Bewegung & +Bg1P $\longrightarrow$ +Br1P & g1 nach r1\\
		\hline
		\\
		\end{tabular} 
	\label{tab:structure1}

\end{table}

Bild \ref{fig:szenario} zeigt, dass das Ego Fahrzeug (rot) den Radfahrer nur \glqq sehen\grqq{} kann, wenn dieses in Zone b2 steht und der Radfahrer sich in Zone g2 befindet. Weiter sieht das  Ego Fahrzeug von b2 und b1 aus auch Zone r1, also den Zebrastreifen. Aussagen dieser Art müssen sich in Strukturregeln nach Tabelle \ref{tab:structure1} niederlegen.

\begin{table}[H]
	\caption{Strukturregeln bezogen auf Sachverhalte}
	\small
	\begin{tabular}{p{0.15\textwidth}m{0.3\textwidth}m{0.5\textwidth}}
		\hline
		Thema & Regel & Bedeutung\\
		\hline
		Signal  & 
		\begin{tikzcd}[row sep=0]
			?s_{i} \ar[rd] &    \\
			& !s_{i}\\
			?^{-1}s_{i}!s_{i} \ar[ru]	 &
		\end{tikzcd} & denn mit Lemma \ref{lem:nodouble1} und \ref{lem:nodouble2} ist
		{\begin{align*} 
			&	?s_{i} \; ?^{-1} s_{i} \; ! \; s_{i}  \\
			&\stackrel{E}{=} \;	? \; ?^{-1} \; !\; s_{i} s_{i} s_{i} \\ &\stackrel{E}{=} \mathcal{I} \;! \;s_{i} & 
			\end{align*} }
		Aus der Erfassung wird der Sachverhalt mittels Signal $	?^{-1}s_{i}!s_{i}$.  \\
		\hline
	\end{tabular} 
	\label{tab:structure2}
\end{table}

Die Gesamtheit dieser Regeln eröffnet keine Option, dass der Radfahrer vom Ego Fahrzeug bemerkt wird, wenn dieser sich in Zone g1 befindet. Eine aus einer bestimmten Zone nicht einsehbare andere Zone ist ein Beispiel für ein Kritikalitätsphänomen der Verdeckung. Dies wird hier nicht weiter vertieft, es zeigt aber, dass über die Modellierung von Kausalitätsvermutungen über Regeln auch unfallbezogene Phänomene oder Erkenntnisse von Gefährdungsanalysen abgebildet werden können.  Für weitere Phänomene dieser Art und Details siehe auch Arbeiten aus dem Teilprojekt 2 des VVM \cite{Neu}.  

\begin{table}[h]
	\small
	\caption{Verhaltensregeln}
	\begin{tabular}{p{0.15\textwidth}m{0.3\textwidth}m{0.5\textwidth}}
		\hline
		Thema & Regel & Bedeutung \\ \hline		
		StVO Regel & 	
		\begin{tikzcd}[row sep=0, column sep=small]
			!+B	\ddot{u}Q+R \ar[rd] &   & \\
			& 0B	\\
			!rQg1P \ar[ru]	 &
		\end{tikzcd} 
		& Diese Sachverhalte sind gegeben: Zebrastreifen voraus, Ego vor Zebrastreifen und fährt, Radfahrer an Zebrastreifen, also halten\\ \hline
		StVO Regel & 	
		\begin{tikzcd}[row sep=0, column sep=small]
			!+Bb1P	\ddot{u}Qr1P \ar[rd] &   & \\
			& 0B	\\
			!rQr1P \ar[ru]	 &
		\end{tikzcd} 
		& Letztlich gleicher Sachverhalt wie oben, alternative Darstellung, vgl. Alternativregeln. Im Beispielgraph nicht verwendet.\\ \hline
		Kollision &
		\begin{tikzcd}[row sep=0, column sep=small]
			XxP \ar[rd] &   & \\
			& 00	\\
			YxP \ar[ru]	 &
		\end{tikzcd}  &Dinge in der selben Zone kollidieren (gleiche Successus von P)  \\ \hline
	\end{tabular} 
\label{tab:verhalten}
\end{table}

Die Tabelle \ref{tab:structure2} enthält die Regeln, die Signale betreffen. Verhaltensregeln für das hier behandelte Beispiel werden aus der StVO entnommen und etwas simplifiziert. Im Wesentlichen soll das Ego Fahrzeug anhalten, wenn es am Zebrastreifen oder auf dem Zebrastreifen den Radfahrer erkennt. Für die Formulierung der Regeln sind folgende Punkte zu beachten:

\begin{itemize}
	\item Verhalten bedeutet, eine Aktion ist notwendig, um den bestehenden Zustand oder ein bestehendes Gleichgewicht zwischen Automat und Umgebung zu erhalten. Fahren auf freier Straße wäre ein Beispiel für solche Gleichgewichte. Dies erfordert zwar Gas geben oder das Betätigen der Fahrradpedale, ist aber im PSM keine Aktion! Ebenfalls ist eine Aktion dann notwendig, wenn nur dadurch die Intention des Ego Fahrzeugs, das Fahrziel z.B.,  erhalten werden kann. 
	\item Verhaltensregeln sind Verhaltensregeln des Ego-Fahrzeugs, Agenten gehen mittels Strukturregeln ein.
\end{itemize}

Auch für dieses Beispiel lässt sich eine Äquivalenzregel formulieren, die eine namentliche Benennung der nächsten Zone durch \glqq Zone voraus\grqq{} erstetzt. Sie sind in Tabelle \ref{tab:equivalent} aufgeführt.

\begin{table}[h]
	\caption{Äquivalenzregeln}
	\small
	\begin{tabular}{p{0.15\textwidth}m{0.3\textwidth}m{0.5\textwidth}}
		\hline
		Thema & Regel & Bedeutung \\ \hline		
		Voraus& !b1PxQr1P $\longrightarrow$ !xQ+R & Wenn Ego Fahrzeug in b1 steht und etwas in r1 gegeben ist, ist es gleichbedeutend zu dem Sachverhalt Objekt xQ voraus in Fahrtrichtung.
	\\ \hline
	\end{tabular} 
\label{tab:equivalent}
\end{table}

\section{PSM-Graph}
Mit den im vorhergehenden Abschnitt vorgestellten Regeln lässt sich nun ein Graph für Abbildung \ref{fig:szenario} erstellen. Es sei daran erinnert, dass die Rolle von Information für das Agieren untersucht werden soll. Dies ist eine Frage der Struktur von Information und deren Flüsse. Mit diesem Anspruch im Blick und den erarbeiteten Formalismen kann nun der Graph nach folgendem, prinzipiell maschinentauglichem Algorithmus erstellt werden:

\begin{enumerate}

\item Platziere die strukturellen Gegebenheiten für jeden Agenten als einen Startknoten in den Graph. Im Beispiel sind das $\ddot uQr1P = $ Zebrastreifen in r1, $+Bb2P = $ Ego Fahrzeug in b2 und fahrend, $rQg2P = $ Radfahrer in g2.
\item Erstelle den nächsten Knoten, in dem alle Regeln angewendet werden, die anwendbar sind. Die Regel-Folge gibt den Knoten an, der neu entsteht. Im Beispiel würde nach einer Strukturregel aus $\ddot uQr1P$ und $+Bb2P$ der Knoten $?Yr1P$ mit $Y=\ddot uQ$ folgen usw.
\item Wiederhole Schritt 2, bis keine der verfügbaren Regeln mehr anwendbar sind.
\item Ggf. lösche alle Pfade, die nicht auf einen Aktionsknoten führen.
	
\end{enumerate}

Die Farben wurden zur Illustration manuell hinzugefügt. Grau sind strukturelle Gegebenheiten. Hellblau zeigt die Knoten von Erfassungsereignissen an, Grün die von Sachverhalten. Orange entspricht den Signalen, deren Signatur im Vorwissen vorhanden sein muss. Zu beachten ist, dass Regeln mit mehreren Bedingungen nur dann anzuwenden sind, wenn alle Bedingungen erfüllt sind (UND Beziehung).

\begin{figure} [H]
	\centering
	\includegraphics[width=1.0\linewidth,height=0.6\textheight]{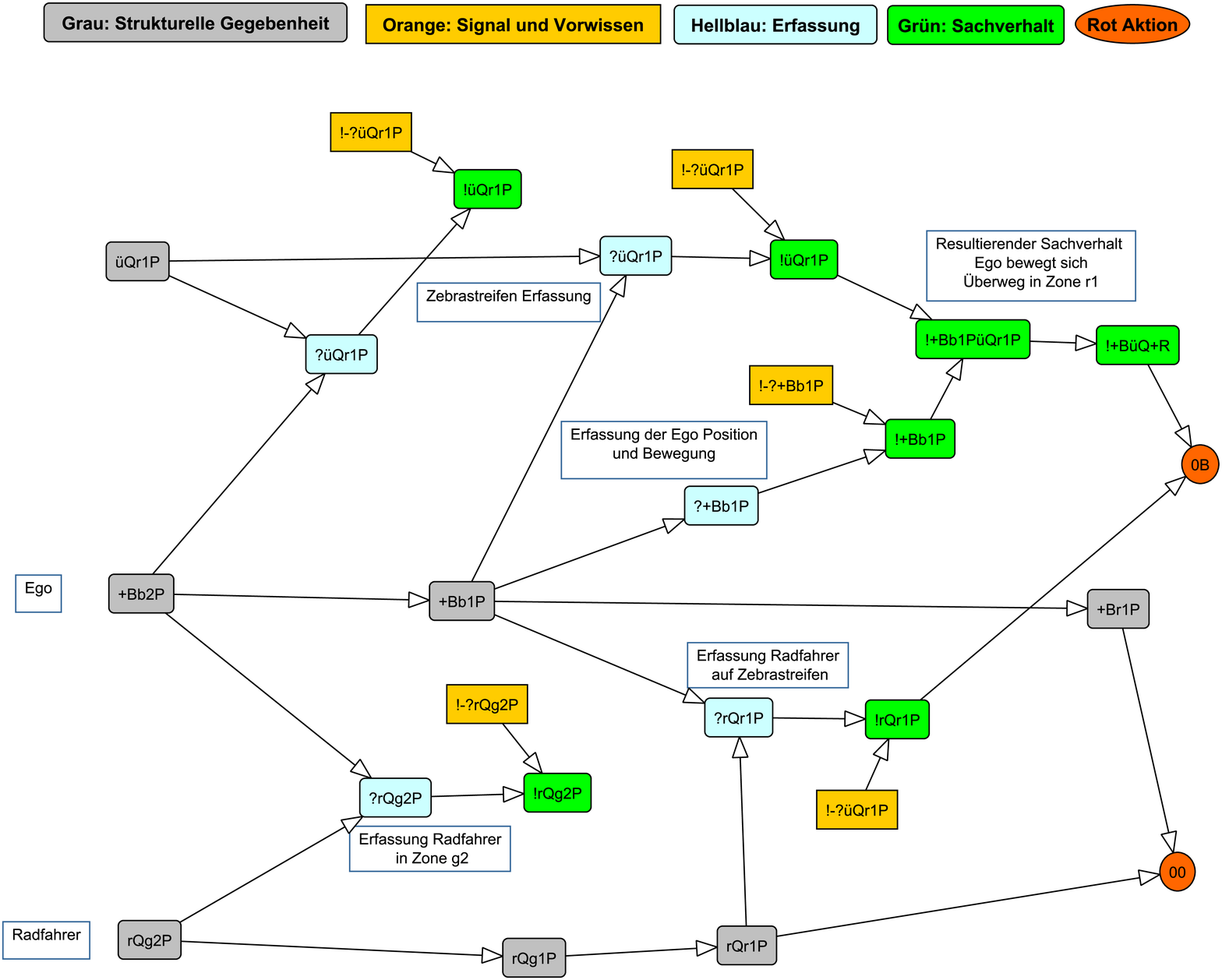}
	\caption[Abbildung]{Beispiel-PSM-Graph}
	\label{fig:psmgraph}
\end{figure}

Im Ergebnis zeigt dieser Graph verschiedene Wege von der Startbedingung des Szenarios zu den Aktionen, wobei unterschiedliche Pfade unterschiedliche Informations- und Signalflüsse repräsentieren. Verschiedene Auswertungen können nun an dem PSM Graphen erfolgen, indem man Pfade durch den Graphen betrachtet. Diese Pfade haben keine unmittelbare Korrelation zu Zeit, aber dennoch sind es Sequenzen und stellen damit eine Abfolge dar. Mit anderen Worten, der Graph symbolisiert die Entwicklung von informationellen Strukturen des inhärent zeitlichen Szenarios.

Der Beispielgraph nach Abbildung \ref{fig:psmgraph} zeigt, dass es Pfade von den grauen Startknoten zum Kollisionsknoten (00) gibt. Und wie aus der Erfahrung zu erwarten ist, führt Nicht-Erfassen in ein unerwünschtes Ereignis: beide Pfade führen nur über graue Knoten, kein Knoten der Erfassung wird durchlaufen. Im Gegenzug führen alle Pfade, die zu einer Bremsung des Ego Fahrzeugs führen (0B), über Erfassungsknoten und Sachverhalte. 

Als Beispiel sei der Pfad der Erkennung des Radfahrers betrachtet. Für das Ego-Fahrzeug wäre dies [+Bb2P, +Bb1P], für den Radfahrer [rQg2P, rQg1P, rQ1P]. Wenn also das Ego-Fahrzeug in b1 steht, der Radfahrer seinerseits in r1, also auf dem Zebrastreifen, dann kann ihn das Ego Fahrzeug erfassen, und über die Sachverhaltsbildung \glqq Radfahrer auf Zebrastreifen\grqq{} (grüner Knoten) die Bremsung einleiten. 

Obwohl nur ein kleines Beispiel behandelt und ein kleiner Auszug gezeichnet wurde, lassen sich hier einige Aussagen ableiten:
\begin{itemize}
	\item Eine Bremsung erfordert einen Sachverhalt. Damit das Ego-Fahrzeug diesen bilden kann, braucht es das Signalverständnis (der orangefarbene Knoten repräsentiert das notwendige Vorwissen dazu). 
	\item Daraus folgt, das Auto braucht also die \textit{Fähigkeit}, u.a. die Erfassung ?rQr1P zu leisten (siehe Pfad) und daraus den Sachverhalt !rQr1P zu bilden. 
	\item Wenn eine Kollision vermieden werden soll, braucht es weitere Regeln (oder modifizierte), damit der Knoten (00) nicht mehr erreicht werden kann.
	\item Sollte die Fähigkeit, den Sachverhalt zu bilden, eingespart und dem Ego Fahrzeug eine von Indikatoren unabhängige Bremsung mitgegeben werden, so wären Regeln zu formulieren, die von der Erfassung direkt auf die Bremsung (0B) führen.
	\item Ein \textit{Sollverhalten} wird repräsentiert durch genau diejenigen Pfade im Graphen des Ego-Fahrzeugs, die zu realisieren sind, beschrieben durch die bezogenen Verhaltensregeln und Fähigkeiten. Letztere sind durch die Erfassungs- und Sachverhaltsknoten (inkl. der Signale) in diesem Pfad angezeigt.
\end{itemize}

Hier zeigt sich also die Natur des PSM Graphen und der Formalismen zu seiner Erstellung als Werkzeug. Durch diesen Graphen wird das Problemfeld strukturiert und einer Bewertung und zielgerichteten genaueren Betrachtung zugänglich. Zusätzlich erlaubt die Formalisierung von Indikatoren und Sachverhalten die transparente Begründung von Relevanz und Sinnhaltigkeit der modellierten Elemente und Annahmen und unterstützt damit eine sicherheitsbezogene Argumentation. Dies folgt aus dem hier beschriebenen Weg der Ableitung der Modellkonstruktue (Indikator, Sachverhalt ect.) bis hin zum Graph. Die Entscheidungen entlang dieser Ableitung sind stets begründbar.

\section{Zusammenfassung und Ausblick}

Mit dem PSM-Konzept werden die Entscheidungsmöglichkeiten eines L4/L5-automatisierten Fahrzeugs abhängig von situationsbedingt empfangenen Signalen über den umgebenden Verkehrsraum und das Verhalten anderer Verkehrsteilnehmer dargestellt.  Von den jeweiligen regelbasierten Entscheidungs\-möglich\-keiten abhängige Veränderungen der Verkehrssituation werden daraus gefolgert. Zudem werden die dem Fahrzeug jeweils nachfolgend  vorliegenden Informationen und daraus abgeleiteten Entscheidungsmöglichkeiten überprüft und der Graph für diese weiterentwickelt. Der Weg dahin führt über Formalisierung von Verhalten, Wissen und Signalen. Das Konzept eröffnet Möglichkeiten zur Darstellung und Auswertung des Verhaltens des Fahrzeugs und daraus resultierender Szenarienverläufe. Dies soll in Folgearbeiten für die Entwicklung des gewünschten Fahrzeugverhaltens, sowie dessen Absicherung und Verifikation genutzt werden.

So wird im weiteren Projektverlauf in VVM, wie zuvor bereits erwähnt, an einer computergestützten Generierung und Auswertung der PSM Graphen gearbeitet. Dies ist für eine praxisgerechte Anwendung des Konzepts notwendig, um die in der Realität gegebene große Zahl von Verlaufs- und Variationsmöglichkeiten des Szenarios bewältigen zu können. Von einer solchen Möglich\-keit ausgehend werden begleitend Fragestellungen für dessen Nutzungs\-möglich\-keiten für die Systemdefinition und eine Verifikation der hinreichenden Sicherheit des implementierten Fahrzeugverhaltens im Verkehrsgeschehen untersucht. 

Unter anderem spielen dabei Fragestellungen der methodologisch durch\-gäng\-igen und konsistenten Bildung von PSM Graphen im Zusammenspiel mit Szenarien und Kritikalitätsphänomenen aus Gefährdungsanalysen eine Rolle. In dem Kontext soll eine systematische Ableitung von Anforderungen und Spezifikationen für die sicherheitsgerichtete Systemarchitektur in Anbetracht des erforderlichen Systemverhaltens erreicht werden. Ebenso ist es eine Fragestellung, wie das PSM Konzept umgesetzt werden kann, um damit eine zielführende Definition der Gesamtheit der jeweils situativ angemessenen Fahrzeugverhaltensweisen darstellen zu können.

Letztlich werden eine systematische Vorgehensweise und ein Bewertungsverfahren benötigt, mit welchem festgestellt werden kann, inwieweit die bis dato konzipierten Fähigkeiten, sowie die technisch spezifizierten Funktionen ein ausreichend risikominimales Agieren des Fahrzeugs im Verkehrsgeschehen gewährleisten, bzw. welche Verbesserungen oder Absicherungen die Unzuläng\-lich\-keiten ausreichend mindern können. Dies ist letztlich die Grundlage für eine Überprüf\-bar\-keit des Fahrzeugverhaltens, sowie dessen Testbarkeit und Verifikation. 

Das Modell selbst wird ebenfalls kontinuierlich weiterentwickelt. Hier sind vor allem die Konkatenation von Regeln und Regelhierarchien zu nennen. Ersteres soll die Möglichkeit untersuchen, dass durch eine Regel die Ausdifferenzierung eines Sachverhaltes erfolgt, welcher dann Gegenstand einer weiteren Regel ist. Regelhierarchien könnten dazu dienen, Situationen abzufangen, wenn keine spezielle Regel einer Ebene anwendbar ist. In dieser Weise wird angestrebt, Verhalten bzw. Sollverhalten des Automaten immer präszier und vollständiger zu beschreiben.

\section{Danksagung}
Die Arbeiten an dieser Veröffentlichung wurden durch das Bundesministerium für Wirtschaft und Energie innerhalb des Forschungsprojekts \glqq Verifikations- und Validierungsmethoden automatisierter Fahrzeuge im urbanen Umfeld\grqq{} gefördert. Die Autoren bedanken sich bei den Projektpartnern für die erfolgreiche Zusammenarbeit, die konstruktiven Kritiken und Gespräche im Zuge der Erarbeitung des PSMs.

\pagebreak
\bibliography{literature}
\bibliographystyle{plain}

\end{document}